\newtheorem{definition}{Definition}
\newtheorem{theorem}{Theorem}
\newtheorem{example}{Example}
\begin{document}

\title{Mechanism Design for Public Projects via Neural Networks}
%\titlenote{Produces the permission block, and copyright information}

% AAMAS: as appropriate, uncomment one subtitle line; check the CFP
%\subtitle{Extended Abstract}
%\subtitle{Blue Sky Ideas Track}
%\subtitle{JAAMAS Track}
%\subtitle{Demonstration}
%\subtitle{Doctoral Consortium}

%\author{Paper \#1840}
%\author{Guanhua Wang, Runqi Guo, Yuko Sakurai, Ali Babar, Mingyu Guo}

\author{
  Guanhua Wang, Runqi Guo\\
  School of Computer Science\\
  University of Adelaide\\
  Australia\\
   \And
   Yuko Sakurai\\
   National Institute of Advanced\\
   Industrial Science and Technology\\
   Japan\\
   \And
  Ali Babar, Mingyu Guo\\
  School of Computer Science\\
  University of Adelaide\\
  Australia\\
}

\maketitle

\begin{abstract} We study mechanism design for nonexcludable and excludable
    binary public project problems.  We aim to maximize the expected number of
    consumers and the expected social welfare.  For the nonexcludable public
    project model, we identify a sufficient condition on the prior distribution
    for the conservative equal costs mechanism to be
    the optimal strategy-proof and individually rational mechanism. For general
    distributions, we propose a dynamic program that solves for the optimal
    mechanism.  For the excludable public project model, we identify a similar
    sufficient condition for the serial cost sharing
    mechanism to be optimal for $2$ and $3$ agents. We
    derive a numerical upper bound. Experiments show that for several common
    distributions, the serial cost sharing mechanism is close to optimality.

The serial cost sharing mechanism is not optimal in general.  We design better
    performing mechanisms via neural networks. Our approach involves several
    technical innovations that can be applied to mechanism design in general.
    We interpret the mechanisms as price-oriented rationing-free
    (PORF) mechanisms, which enables us to
    move the mechanism's complex (\emph{e.g.}, iterative) decision making off
    the network, to a separate program.  We feed the prior distribution's
    analytical form into the cost function to provide quality gradients for
    training.  We use supervision to manual mechanisms as a systematic way for
    initialization.  Our approach of ``supervision and then gradient descent''
    is effective for improving manual mechanisms' performances. It is also
effective for fixing constraint violations for heuristic-based mechanisms that
are infeasible.  \end{abstract}

\keywords{Mechanism Design; Neural Networks; Public Projects}

\section{Introduction}\label{sec:intro} Many multiagent system applications (\emph{e.g.},
crowdfunding) are related to the public project problem. The public project
problem is a classic economic model that has been studied extensively in both
economics and computer
science~\cite{Mas-Colell1995:Microeconomic,Moore2006:General,Moulin1988:Axioms}.
Under this model, a group of agents decide whether or not to fund a
\emph{nonrivalrous} public project --- when one agent consumes the project, it
does not prevent others from using it.

We study both the \textbf{nonexcludable} and the \textbf{excludable} versions
of the \emph{binary} public project problem. The binary decision is either to
build or not. If the decision is not to build, then no agents can consume the
project.  For the \emph{nonexcludable} version, once a project is built, all
agents can consume it, including those who do not pay.  For example, if the
public project is an open source software project, then once the project is
built, everyone can consume it.  For the \emph{excludable} version, the
mechanism has the capability to exclude agents from the built project. For
example, if the public project is a swimming pool, then we could impose the
restriction that only some agents (\emph{e.g.}, the paying agents) have access
to it.

Our aim is to design mechanisms that maximize \emph{expected} performances.  We
consider two design objectives. One is to maximize the \textbf{expected number
of consumers} (expected number of agents who are allowed to consume the
project).\footnote{For the nonexcludable public project model, this is simply
to maximize the probability of building, as the number of consumers is always
the total number of agents if the project is built.  } The other objective is
to maximize the agents' \textbf{expected social welfare} (considering payments). It should be noted
that for some settings, we obtain the same optimal mechanism under these two
different objectives. In general, the optimal mechanisms differ.

We argue that maximizing the expected number of consumers is \emph{more fair}
in some application scenarios.  When maximizing the social welfare, the main
focus is to ensure the high-valuation agents are served by the project, while
low-valuation agents have much lower priorities. On the other hand, if the
objective is to maximize the expected number of consumers, then low-valuation
agents are as important as high-valuation agents.

Guo~\emph{et.al.}~\cite{Guo2018:Cost} studied an objective that is very similar
to maximizing the expected number of consumers.  The authors studied the
problem of crowdfunding security information.  There is a premium time period.
If an agent pays more, then she receives the information earlier. If an agent
pays less or does not pay, then she incurs a time penalty --- she receives the
information slightly delayed.  The authors' objective is to minimize the
expected delay.  If every agent either receives the information at the very
beginning of the premium period, or at the very end, then minimizing the
expected delay is equivalent to maximizing the expected number of consumers.
The public project is essentially the premium period. It should be noted that
when crowdfunding security information, it is desirable to have more agents
protected, whether their valuations are high or low. Hence, in this application
domain, maximizing the number of consumers is more suitable than maximizing
social welfare.  However, since any delay that falls \emph{strictly} inside the
premium period is not valid for our \emph{binary} public project model, the
mechanisms proposed in~\cite{Guo2018:Cost} do not apply to our setting.

% \vspace{.1in} We adopt the characterization results from
% Ohseto~\cite{Ohseto2000:Characterizations} for \emph{strategy-proof} and
% \emph{individually rational} mechanisms for both the nonexcludable and the
% excludable public project models.  For the nonexcludable version, we only
% need to focus on the family of \emph{unanimous mechanisms}.  An example
% unanimous mechanism is the \emph{conservative equal cost
% mechanism}~\cite{Moulin1994:Serial}.  For the excludable version, we only
% need to consider the family of \emph{largest unanimous mechanisms}. For the
% excludable version, the characterizations require two additional technical
% conditions called \emph{Demand Monotonicity} and \emph{Access Independence}.
% An example largest unanimous mechanism is the \emph{serial cost sharing
% mechanism}~\cite{Moulin1994:Serial}.
With slight technical adjustments, we adopt the existing characterization
results from Ohseto~\cite{Ohseto2000:Characterizations} for
\emph{strategy-proof} and \emph{individually rational} mechanisms for both the
nonexcludable and the excludable public project problems.  Before summarizing
our results, we introduce the following notation. We assume the agents'
valuations are drawn independently and identically from a known distribution,
with $f$ being the probability density function.

For the nonexcludable public project problem, we propose a sufficient condition
for the \emph{conservative equal costs mechanism}~\cite{Moulin1994:Serial} to
be optimal.  For maximizing the expected number of consumers, $f$ being
\emph{log-concave} is a sufficient condition. For maximizing social welfare, besides
log-concavity, we propose a condition on $f$ called
\emph{welfare-concavity}.  For distributions not satisfying the above
conditions, we propose a dynamic program that solves for the optimal mechanism.

For the excludable public project problem, we also propose a sufficient
condition for the \emph{serial cost sharing mechanism}~\cite{Moulin1994:Serial}
to be optimal.  Our condition only applies to cases with $2$ and $3$ agents.
For $2$ agents, the condition is identical to the nonexcludable version.  For
$3$ agents, we also need $f$ to be nonincreasing.  For more agents, we propose
a numerical technique for calculating the objective upper bounds. For a few
example log-concave distributions, including common distributions like uniform
and normal, our experiments show that the serial cost sharing mechanism is
close to optimality.

Without log-concavity, the serial cost sharing mechanism can be far away from
optimality. We propose a neural network based approach, which successfully
identifies better performing mechanisms.  Mechanism design via deep
learning/neural networks has been an emerging topic~\cite{Golowich2018:Deep,
Duetting2019:Optimal,Shen2019:Automated,Manisha2018:Learning}.  Duetting
\emph{et.al.}~\cite{Duetting2019:Optimal} proposed a general approach for
revenue maximization via deep learning. The high-level idea is to manually
construct often complex network structures for representing mechanisms for
different auction types. The cost function is the negate of the revenue. By
minimizing the cost function via gradient descent, the network parameters are
adjusted, which lead to better performing mechanisms.  The mechanism design
constraints (such as strategy-proofness) are enforced by adding a penalty term
to the cost function. The penalty is calculated by sampling the type profiles
and adding together the constraint violations.  Due to this setup, the final
mechanism is only approximately strategy-proof. The authors demonstrated that
this technique scales better than the classic mixed integer programming based
automated mechanism design approach~\cite{Conitzer2002:Complexity}.  Shen
\emph{et.al.}~\cite{Shen2019:Automated} proposed another neural network based
mechanism design technique, involving a seller's network and a buyer's network.
The seller's network provides a menu of options to the buyers.  The buyer's
network picks the utility-maximizing menu option. An exponential-sized
hard-coded buyer's network is used (\emph{e.g.}, for every discretized type
profile, the utility-maximizing option is pre-calculated and stored in the
network).  The authors mostly focused on settings with only one buyer.

Our approach is different from previous approaches, and it involves three
technical innovations, which have the potential to be applied to mechanism
design in general.

\vspace{.1in}\noindent \emph{Calculating mechanism decisions off the network by
interpreting mechanisms as price-oriented rationing-free (PORF)
mechanisms~\cite{Yokoo2003:Characterization}:} A mechanism often involves
binary decisions (\emph{e.g.}, for an agent, depending on whether her valuation
is above the price offered to her, we end up with different situations). A
common way to model binary decisions on neural networks is by using the
\emph{sigmoid} function (or similar activation functions).  A mechanism may
involve a complex decision process, which makes it difficult or impractical to
model via \emph{static} neural networks.  For example, for our setting, a
mechanism involves \emph{iterative} decision making. We could stack multiple
sigmoid functions to model this.  However, stacking sigmoid functions leads to
vanishing gradients and significant numerical errors. Instead, we rely on the
PORF interpretation: every agent faces a set of options (outcomes with prices)
determined by the other agents. We single out a randomly chosen agent $i$, and
draw a sample of \emph{the other agents' types $v_{-i}$}.  We use a separate program (off the
network) to calculate the options $i$ would face. For example, the separate
program can be any Python function, so it is trivial to handle complex and
iterative decision making. We no longer need to construct complex network
structures like the approach in~\cite{Duetting2019:Optimal} or resort to
exponential-sized hard-coded buyer networks like the approach
in~\cite{Shen2019:Automated}.  After calculating $i$'s options, we link the
options together using terms that carry gradients.  One effective way to do
this is by making use of the prior distribution as discussed below.

\vspace{.1in}\noindent \emph{Feeding prior distribution into the cost
function:} In conventional machine learning, we have access to a finite set of
samples, and the process of machine learning is essentially to infer the true
probability distribution of the samples. For existing neural network mechanism
design
approaches~\cite{Duetting2019:Optimal,Shen2019:Automated}
(as well as this paper), it is assumed that the prior distribution is known.
After calculating agent $i$'s options, we make use of $i$'s distribution to
figure out the probabilities of all the options, and then derive the expected
objective value from $i$'s perspective. We assume that the prior distribution is continuous. If we have the \emph{analytical form}
of the prior distribution, then the probabilities can
provide quality gradients for our training process. This is due to the fact that probabilities are
calculated based on neural network outputs. In summary, we
combine both samples and distribution in our cost function.  We also have an
example showing that even if the distribution we provide is not $100\%$
accurate, it is still useful.  (Sometimes, we do not have the analytical form
of the distribution.  We can then use an analytical approximation instead.)

\vspace{.1in}\noindent
\emph{Supervision to manual mechanisms as initialization:} We start our
training by first conducting supervised learning. We teach the network to mimic
an existing manual mechanism, and then leave it to gradient descent. This is
essentially a systematic way to improve manual mechanisms.\footnote{Of course,
if the manual mechanism is already optimal, or is ``locally optimal'', then the
gradient descent process may fail to find improvement.} In our experiments,
besides the \emph{serial cost sharing mechanism}, we also considered two
heuristic-based manual mechanisms as starting points. One heuristic is feasible
but not optimal, and the gradient descent process is able to improve its
performance. The second heuristic is not always feasible, and the gradient
descent process is able to fix the constraint violations. Supervision to manual
mechanisms is often better than random initializations.  For one thing, the
supervision step often pushes the performance to a state that is already
somewhat close to optimality.  It may take a long time for random
initializations to catch up. In computational expensive scenarios, it may never
catch up.  Secondly, supervision to a manual mechanism is a systematic way to
set good initialization point, instead of trials and errors.  It should be noted
that for many conventional deep learning application domains, such as computer
vision, well-performing manual algorithms do not exist. Fortunately, for
mechanism design, we often have simple and well-performing mechanisms to be
used as starting points.

\section{Model Description}

$n$ agents need to decide whether or not to build a public project.  The
project is \emph{binary} (build or not build) and \emph{nonrivalrous} (the cost
of the project does not depend on how many agents are consuming it).  We
normalize the project cost to $1$.  Agent $i$'s type $v_i\in[0,1]$ represents
her private valuation for the public project. We assume that the $v_i$ are
drawn \emph{i.i.d.} from a known prior distribution. Let $F$ and $f$ be the CDF
and PDF, respectively. We assume that the distribution is continuous and $f$ is
differentiable.

\begin{itemize}

    \item For the nonexcludable public project model, agent $i$'s valuation is
        $v_i$ if the project is built, and $0$ otherwise.

    \item For the excludable public project model, the outcome space is
        $\{0,1\}^n$.  Under outcome $(a_1,a_2,\ldots,a_n)$, agent $i$ consumes
        the public project if and only if $a_i=1$. If for all $i$, $a_i=0$,
        then the project is not built.  As long as $a_i=1$ for some $i$, the
        project is built.

\end{itemize}

We use $p_i\ge 0$ to denote agent $i$'s payment. We require that $p_i=0$ for
all $i$ if the project is not built and $\sum p_i=1$ if the project is built.
An agent's payment is also referred to as her \emph{cost share} of the project.
An agent's utility is $v_i-p_i$ if she gets to consume the project, and $0$
otherwise.

We focus on \emph{strategy-proof} and \emph{individually rational} mechanisms.
We study two objectives. One is to maximize the expected number of consumers.
The other is to maximize the social welfare.

% Under a strategy-proof mechanism, it is a dominant strategy for an agent to
% truthfully report her valuation. Under an individually rational mechanism, an
% agent's utility is at least $0$.

% Let $v=(v_1,v_2,\ldots,v_n)$ be a type profile.  Let $M$ be a mechanism. We
% define the \emph{benefit} function $B_M(v)$ as follows:

% \begin{itemize}

%     \item If the objective is to maximize the expected number of consumers,
%         then $B_M(v)$ denotes the number of consumers under $M$ for type
%         profile $v$.

%     \item If the objective is to maximize the social welfare, then $B_M(v)$
%         denotes the agents' total utility under $M$ for type profile $v$.

% \end{itemize}

% Our goal is to maximize the expectation $E_{v\sim f}(B_M(v))$.

\section{Characterizations and Bounds}

We adopt a list of existing characterization results
from~\cite{Ohseto2000:Characterizations}, which characterizes strategy-proof
and individual rational mechanisms for both nonexcludable and excludable public
project problems.  A few technical adjustments are needed for the existing
characterizations to be valid for our problem.  The characterizations
in~\cite{Ohseto2000:Characterizations} were not proved for quasi-linear
settings. However, we verify that the assumptions needed by the proofs are
valid for our model setting. One exception is that the characterizations
in~\cite{Ohseto2000:Characterizations} assume that every agent's valuation is
strictly positive.  This does not cause issues for our objectives as we are
maximizing for expected performances and we are dealing with continuous
distributions.\footnote{Let $M$ be the optimal mechanism. If we restrict the
valuation space to $[\epsilon,1]$, then $M$ is Pareto dominated by an
unanimous/largest unanimous mechanism $M'$ for the nonexcludable/excludable
setting. The expected performance difference between $M$ and $M'$ vanishes as
$\epsilon$ approaches $0$. Unanimous/largest unanimous mechanisms are still
strategy-proof and individually rational when $\epsilon$ is set to exactly
$0$.} We are also safe to drop the \emph{citizen sovereign} assumption
mentioned in one of the characterizations\footnote{If a mechanism always
builds, then it is not individually rational in our setting.  If a mechanism
always does not build, then it is not optimal.}, but not the other two minor
technical assumptions called \emph{demand monotonicity} and \emph{access
independence}.

\subsection{Nonexcludable Mech. Characterization}

\begin{definition}[Unanimous mechanism~\cite{Ohseto2000:Characterizations}]
    There is a constant cost share vector $(c_1,c_2,\ldots,c_n)$ with $c_i\ge
    0$ and $\sum c_i=1$. The mechanism builds if and only if $v_i\ge c_i$ for
    all $i$. Agent $i$ pays exactly $c_i$ if the decision is to build.  The
unanimous mechanism is strategy-proof and individually rational.
\end{definition}

\begin{theorem}[Nonexcludable mech. characterization~\cite{Ohseto2000:Characterizations}]
    For the nonexcludable public project model,
if a mechanism is strategy-proof, individually rational, and citizen
sovereign, then it is weakly Pareto dominated by an unanimous mechanism.

\noindent Citizen sovereign: Build and not build are both possible outcomes.
\end{theorem}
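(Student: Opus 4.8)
The plan is to prove the characterization by extracting, from strategy-proofness alone, the menu that each agent faces, and then exhibiting a single unanimous mechanism built from these menus that is at least as good for every agent at every profile. First I would invoke the taxation principle: fixing the other agents' reports $v_{-i}$, strategy-proofness implies that agent $i$'s payment cannot depend on her own report among all reports that lead to building (otherwise she would mimic the cheapest such report), so the build-payment is a function $c_i(v_{-i})$ of the others alone, and the set of her own reports that trigger building is an upper interval $[t_i(v_{-i}),1]$, since a higher valuation only makes her weakly more willing to build. Individual rationality forces $t_i(v_{-i})\ge c_i(v_{-i})$, and a short gap argument closes this to equality: if $t_i>c_i(v_{-i})$, an agent whose true value lies strictly between them would report $t_i$, secure the project at price $c_i(v_{-i})$, and gain strictly positive utility, contradicting strategy-proofness. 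Hence the mechanism builds exactly when $v_i\ge c_i(v_{-i})$ for every $i$, and a building agent pays precisely $c_i(v_{-i})$.

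Next I would establish two structural facts. Monotonicity: the build region is upward closed, because raising a single coordinate $v_j$ leaves $c_j(v_{-j})$ unchanged (it does not depend on $v_j$), so a building profile stays above agent $j$'s own threshold and remains a building profile; iterating gives closure under $v'\ge v$. This in turn forces each $c_i(v_{-i})$ to be nonincreasing in every coordinate of $v_{-i}$: at $v_i=c_i(v_{-i})$ the project builds, raising any $v_j$ keeps it building, and so $v_i$ still clears the new threshold. Budget balance at the top: citizen sovereignty guarantees the project is built for some profile, so by upward closure it is built at the all-ones profile; there the payments sum to the cost, giving $\sum_i c_i(\mathbf{1}_{-i})=1$. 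I would then define the candidate unanimous mechanism by the constant cost shares $c_i:=c_i(\mathbf{1}_{-i})$, which are nonnegative and sum to $1$, as the definition requires.

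Finally I would verify weak Pareto domination. By monotonicity $c_i\le c_i(v_{-i})$ for every $v_{-i}$, so whenever the original mechanism builds (i.e. $v_i\ge c_i(v_{-i})$ for all $i$) we also have $v_i\ge c_i$ for all $i$, and the unanimous mechanism builds too; there each agent pays the weakly smaller amount $c_i$ and is weakly better off. On profiles where the unanimous mechanism builds but the original does not, each agent receives $v_i-c_i\ge 0$ versus $0$; on the remaining profiles neither builds. Thus the unanimous mechanism weakly Pareto dominates the original at every profile, which in particular makes it at least as good for both the expected-number-of-consumers and the expected-social-welfare objectives. I expect the first paragraph to be the main obstacle: cleanly extracting the threshold-equals-price menu structure from strategy-proofness and pinning down the tie and boundary behavior, together with the technical adaptation of Ohseto's characterization, which assumes strictly positive valuations, to our continuous quasi-linear setting, handled by the limiting argument sketched in the earlier footnote.
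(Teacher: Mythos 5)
Your proof is correct, but note that the paper never proves this theorem at all: it is imported from Ohseto~\cite{Ohseto2000:Characterizations}, and the paper's only original contribution here is the surrounding discussion verifying that Ohseto's assumptions carry over to the quasi-linear setting (plus the footnoted $\epsilon$-limiting argument handling Ohseto's strictly-positive-valuation assumption). Your argument is instead a self-contained first-principles derivation: the taxation principle gives a build-price $c_i(v_{-i})$ independent of agent $i$'s report, individual rationality plus a gap argument pins the acceptance threshold to exactly $c_i(v_{-i})$, upward-closure of the build region forces each $c_i(v_{-i})$ to be nonincreasing in $v_{-i}$, budget balance at the all-ones profile (reached via the ``build is possible'' half of citizen sovereignty and upward closure) yields constant shares $c_i := c_i(\mathbf{1}_{-i})$ with $c_i \ge 0$ and $\sum_i c_i = 1$, and a profile-by-profile check gives weak domination. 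This buys two things the citation route does not: it works directly on the type space $[0,1]$ --- you never actually use strict positivity, so your closing appeal to the paper's $\epsilon$-limiting footnote is unnecessary in your own argument --- and it makes explicit where each hypothesis enters (in particular, only the ``build is possible'' half of citizen sovereignty is needed, since IR alone rules out the always-build mechanism, matching the paper's own footnote). Two small points to tidy: $c_i(v_{-i})$ may be undefined when no report of agent $i$ triggers building, so set it to $+\infty$ there and observe that your monotonicity and domination steps only evaluate it at profiles where building occurs, with $c_i(\mathbf{1}_{-i})$ well-defined by citizen sovereignty plus upward closure; and the build set in $v_i$ may be the half-open interval $\left(c_i(v_{-i}),1\right]$ rather than closed, but the boundary is an indifference point (zero utility either way), so the weak Pareto comparison is unaffected.
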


Mechanism $1$ weakly Pareto dominates Mechanism $2$ if every agent
weakly prefers Mechanism $1$ under every type profile.

\begin{example}[Conservative equal costs mechanism~\cite{Moulin1994:Serial}] An
example unanimous mechanism works as follows: we build the project if and only
if every agent agrees to pay $\frac{1}{n}$.  \end{example}

\subsection{Excludable Mech. Characterization}

\begin{definition}[Largest unanimous
    mechanism~\cite{Ohseto2000:Characterizations}] For every nonempty coalition
    of agents $S = \{S_1,S_2,\ldots,S_k\}$, there is a constant cost share
    vector $C_S=(c_{S_1},c_{S_2},\ldots,c_{S_k})$ with $c_{S_i}\ge 0$ and
    $\sum_{1\le i\le k} c_{S_i}=1$.  $c_{S_i}$ is agent $S_i$'s cost share
    under coalition $S$. Agents in $S$ unanimously approve the cost share
    vector $C_S$ if and only if $v_{S_i}\ge c_{S_i}$ for all $i$.

    The mechanism picks the largest coalition $S^*$ satisfying that $C_{S^*}$
    is unanimously approved.  If $S^*$ does not exist, then the decision is not
    to build.  If $S^*$ exists, then it is always unique, in which case the
    decision is to build. Only agents in $S^*$ are consumers of the public
    project and they pay according to $C_{S^*}$.

    If agent $i$ belongs to two coalitions $S$ and $T$ with $S\subsetneq T$,
    then $i$'s cost share under $S$ must be greater than or equal to her cost
    share under $T$. Let $N$ be the set of all agents. One way to interpret the
    mechanism is that the agents start with the cost share vector $C_N$.  If
    some agents do not approve their cost shares, then they are forever
    removed.  The remaining agents face new and increased cost shares.  We
    repeat the process until all remaining agents approve their shares, or when
    all agents are removed.  The largest unanimous mechanism is strategy-proof
and individually rational.  \end{definition}

\begin{theorem}[Excludable mech.
    characterization~\cite{Ohseto2000:Characterizations}] For the excludable
    public project model, if a mechanism is strategy-proof, individually
    rational, and satisfies the following assumptions, then it is weakly Pareto
    dominated by a largest unanimous mechanism.

Demand monotonicity: Let $S$ be the set of consumers.  If for every agent $i$
    in $S$, $v_i$ stays the same or increases, then all agents in $S$ are still
    consumers.  If for every agent $i$ in $S$, $v_i$ stays the same or
    increases, and for every agent $i$ not in $S$, $v_i$ stays the same or
    decreases, then the set of consumers should still be $S$.

    Access independence: For all $v_{-i}$, there exist $v_i$ and $v_i'$ so that
agent $i$ is a consumer under type profile $(v_i,v_{-i})$ and is not a consumer
under type profile $(v_i',v_{-i})$.  \end{theorem}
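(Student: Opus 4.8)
The plan is to route everything through the taxation principle for strategy-proof mechanisms (the PORF view cited as Yokoo et al.) and then reconstruct the coalition-indexed cost-share table of a largest unanimous mechanism. First I would fix an agent $i$ and the others' reports $v_{-i}$. Strategy-proofness says $i$'s report only selects among a menu of outcome-payment pairs that is determined by $v_{-i}$ alone. In the excludable model the only distinction that matters to $i$ is consume versus not consume; individual rationality forces a non-consumer to pay $0$ (else her utility is negative), so the menu collapses to a single admission price $q_i(v_{-i})$, and strategy-proofness makes $i$ a consumer exactly when $v_i \ge q_i(v_{-i})$, in which case she pays $q_i(v_{-i})$. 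Breaking ties at $v_i = q_i(v_{-i})$ in favor of consuming is a weak Pareto improvement that preserves both properties, so I may assume it. This yields the threshold-price structure on which every later step rests.

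Next I would show that $q_i(v_{-i})$ depends on $v_{-i}$ only through the realized consumer set, call it $S$. The taxation principle already gives that each agent's payment is independent of her own report, so $p_i = q_i(v_{-i})$; budget balance gives $\sum_{i \in S} q_i(v_{-i}) = 1$ on the build event. Demand monotonicity then pins the consumer set to a fixed $S$ on a full-dimensional region of profiles (raise the values inside $S$, lower them outside), and access independence guarantees that each threshold is genuinely interior, so these regions tile the type space without degeneracy and coalition membership is value-determined. The crux is to upgrade ``the consumer set is constant on the region'' to ``the payments are constant on the region''; once that holds I can define the constant cost share $c_{S_i} := q_i(v_{-i})$ for each coalition $S$ and each $i \in S$, with $\sum_i c_{S_i} = 1$, matching the vector $C_S$ in the definition of a largest unanimous mechanism.

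The third step is the nested monotonicity $c_{S_i} \ge c_{T_i}$ whenever $S \subsetneq T$ and $i \in S$. I would prove it by starting from a profile that realizes $T$ and deleting agents one at a time (lowering them below their thresholds), invoking demand monotonicity to track how the consumer set shrinks and arguing that the fixed unit cost, now split among fewer agents, can only raise the share of each survivor. With the table $\{C_S\}$ and this monotonicity in hand, I would define the candidate largest unanimous mechanism $M'$ that selects the largest unanimously approved coalition $S^*$; nested monotonicity is exactly what forces $S^*$ to be unique. Finally I would check that $M'$ weakly Pareto dominates the original mechanism at every profile: the original's consumer set is always one of the approved coalitions, $M'$ takes the largest such, and by nested monotonicity every retained consumer pays weakly less while each newly admitted agent gains non-negative utility, so no agent is made worse off. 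The footnoted adjustments let me drop the citizen-sovereign hypothesis and take $\epsilon \to 0$ to absorb the $v_i = 0$ boundary left open by Ohseto's strict-positivity assumption.

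I expect the main obstacle to be precisely the payment-constancy step, i.e. showing that $q_i(v_{-i})$ collapses to a function of the coalition alone. The taxation principle and budget balance only force each $p_i$ to be independent of $v_i$; they leave room for $i$'s share to drift as the other insiders' values move within the region, and demand monotonicity as stated constrains the consumer \emph{set} rather than the payments. Ruling this out is where the quasi-linear adaptation of Ohseto's argument does real work, since the original characterization was not established for quasi-linear utilities and I must re-derive the collapse from the joint force of strategy-proofness, demand monotonicity, access independence, and budget balance rather than import it. Establishing the nested monotonicity while keeping all the coalition budget constraints mutually consistent is the secondary difficulty; the remaining steps are bookkeeping around the taxation principle.
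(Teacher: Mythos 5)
The paper does not prove this theorem at all: it is imported directly from Ohseto~\cite{Ohseto2000:Characterizations}, and the paper's only original content surrounding it is the verification that the assumptions used in Ohseto's proof remain valid in this quasi-linear model, plus the footnote argument for dropping the strictly-positive-valuation restriction by letting $\epsilon \to 0$. So your proposal cannot be matched against a proof in the paper; it is an attempt to reconstruct Ohseto's characterization from scratch, and as such it contains two genuine gaps, one of which you acknowledge without closing.

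First, the payment-constancy step --- that the taxation-principle threshold $q_i(v_{-i})$ depends on $v_{-i}$ only through the realized consumer set --- is the entire substance of the characterization, and your sketch explicitly defers it (``once that holds I can define $c_{S_i}$\ldots''). Demand monotonicity as stated constrains consumer \emph{sets}, not payments, and budget balance constrains only the sum $\sum_{i\in S} q_i(v_{-i}) = 1$, so nothing in your outline actually forces the collapse; naming the crux is not the same as resolving it, and without it there is no coalition-indexed table $C_S$ to build a dominating mechanism from. Second, your derivation of nested monotonicity ($c_{S_i} \ge c_{T_i}$ for $S \subsetneq T$, $i \in S$) is a non sequitur: the accounting heuristic ``the fixed unit cost, now split among fewer agents, can only raise the share of each survivor'' is false as stated, since budget balance permits one survivor's share to fall while another's rises when an agent is deleted; ruling this out requires a genuine argument from strategy-proofness together with demand monotonicity. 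The portions you do complete are sound --- individual rationality plus nonnegative payments correctly force non-consumers to pay zero, so the menu collapses to a single admission price, and your final domination step works \emph{granted} the structure (with nested monotonicity, any unanimously approved coalition is contained in $S^*$ by induction along the iterative-removal process, so every original consumer is retained at a weakly lower price). But the two structural lemmas you skip are precisely what the cited Ohseto proof supplies, so the proposal is a plausible plan rather than a proof.
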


\begin{example}[Serial cost sharing mechanism~\cite{Moulin1994:Serial}] Here is
    an example largest unanimous mechanism.  For every nonempty subset of
    agents $S$ with $|S|=k$, the cost share vector is
    $(\frac{1}{k},\frac{1}{k},\ldots,\frac{1}{k})$.  The mechanism picks the
    largest coalition where the agents are willing to pay equal shares.
\end{example}

%     Another interpretation of this mechanism is the follow iterative process:
%     Initially, all $n$ agents are offered a cost share of $\frac{1}{n}$. If all
%     agents agree, then we build the project and every agent pays $\frac{1}{n}$.
%     If some agents disagree, then they are forever removed.  The remaining $k$
%     agents are offered a cost share of $\frac{1}{k}$. We repeat this until all
%     remaining agents agree or all agents are removed. If all agents are
%     removed, then the decision is not to build.  An agent may face multiple
%     cost share offers throughout the process, but the cost share offers are
%     nondecreasing, so there is no room for strategic manipulation.

    % When an agent reports truthfully under a largest unanimous mechanism, her
    % utility is maximized because her cost share is completely determined by the
    % others. For example, if there are three agents in total, and agents $2$ and
    % $3$ report $\frac{1}{2}$ and $\frac{1}{3}$ respectively, then agent $1$'s
    % cost share offer is $\frac{1}{3}$. Agent $1$ is a consumer if and only if
    % she can afford $\frac{1}{3}$. If agents $2$ and $3$ report $\frac{1}{2}$
    % and $\frac{1}{4}$ respectively, then agent $1$'s cost share offer is
%$\frac{1}{2}$.

Deb and Razzolini~\cite{Deb1999:Voluntary} proved that if we further require an
\emph{equal treatment of equals} property (if two agents have the same type,
then they should be treated the same), then the only strategy-proof and
individually rational mechanism left is the serial cost sharing mechanism.  For
many distributions, we are able to outperform the serial cost sharing mechanism.
That is, equal treatment of equals (or requiring anonymity) may hurt
performances.

\subsection{Nonexcludable Public Project Analysis}\label{sub:nonexcludable}

We start with an analysis on the nonexcludable public project. The results
presented in this section will lay the foundation for the more complex
excludable public project model coming up next.

Due to the characterization results, we focus on the family of unanimous
mechanisms. That is, we are solving for the optimal cost share vector
$(c_1,c_2,\ldots,c_n)$, satisfying that $c_i\ge 0$ and $\sum c_i=1$.

Recall that $f$ and $F$ are the PDF and CDF of the prior distribution.  The
\emph{reliability function} $\overline{F}$ is defined as $\overline{F}(x)=1-F(x)$.  We
define $w(c)$ to be the expected utility of an agent when her cost share is
$c$, conditional on that she accepts this cost share.
\[w(c)=\frac{\int_c^1 (x-c)f(x)dx}{\int_c^1f(x)dx}\]
One condition we will use is \emph{log-concavity}:
if $\log(f(x))$ is concave in $x$, then $f$ is log-concave.
We also introduce another condition called \emph{welfare-concavity}, which requires $w$ to be concave.%, which is equivalent to $f(x)+xf'(x)\ge 0$.

\begin{theorem}\label{thm:nonexcludable}
If $f$ is log-concave, then the conservative equal costs mechanism maximizes the expected
number of consumers.
If $f$ is log-concave and welfare-concave, then the conservative equal costs mechanism
maximizes the expected social welfare.
\end{theorem}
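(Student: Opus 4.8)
The plan is to use the characterization theorem to restrict attention to unanimous mechanisms, so the problem becomes an optimization of the objective over the cost-share simplex $\{(c_1,\dots,c_n): c_i\ge 0,\ \sum_i c_i=1\}$, with the conservative equal costs mechanism sitting at the centroid $c_i=1/n$. Because the valuations are i.i.d.\ and the build event is $\{v_i\ge c_i \text{ for all } i\}$, both objectives factor cleanly. For the expected number of consumers I would first note that in the nonexcludable model this equals $n\cdot\Pr[\text{build}] = n\prod_i \overline{F}(c_i)$, so it suffices to maximize $\sum_i \log \overline{F}(c_i)$ over the simplex. For social welfare I would expand $\mathbb{E}\big[(\sum_i v_i - 1)\,\mathbf 1[\text{build}]\big]$; writing $\sum_i v_i - 1 = \sum_i (v_i-c_i)$ and using independence together with the definition of $w$, each summand collapses to $w(c_i)\overline{F}(c_i)\prod_{j\ne i}\overline{F}(c_j)$, giving the clean form $\mathbb{E}[\text{SW}] = \big(\sum_i w(c_i)\big)\big(\prod_j \overline{F}(c_j)\big)$.

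The unifying step is to take logarithms and reduce both objectives to maximizing a permutation-symmetric concave function over the simplex. For the number-of-consumers objective this is $\sum_i \log\overline{F}(c_i)$ (up to the additive constant $\log n$); for the welfare objective it is $\log\big(\sum_i w(c_i)\big) + \sum_j \log\overline{F}(c_j)$. The term $\sum_j \log\overline{F}(c_j)$ is concave once I know $\overline{F}$ is log-concave, and the term $\log\big(\sum_i w(c_i)\big)$ is concave because $\sum_i w(c_i)$ is concave (welfare-concavity, summed coordinatewise) and $\log$ is increasing and concave, so the composition is concave. Each resulting function is symmetric and concave, and a symmetric concave function on the simplex is maximized at the centroid: if $c^*$ is a maximizer, averaging $c^*$ over all coordinate permutations yields the uniform point $(1/n,\dots,1/n)$, which by concavity does at least as well. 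This pins the optimum at the conservative equal costs mechanism.

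The main obstacle, and the one genuinely analytic ingredient, is the lemma that log-concavity of $f$ implies log-concavity of the reliability function $\overline{F}$ (equivalently, a monotone hazard rate). I would prove this directly rather than merely cite it: writing the reciprocal hazard rate as $\overline{F}(x)/f(x) = \int_0^{1-x} \exp\big(\log f(x+s) - \log f(x)\big)\,ds$, concavity of $\log f$ makes the increment $\log f(x+s)-\log f(x)$ nonincreasing in $x$ for each fixed $s\ge 0$; since the upper limit $1-x$ also decreases in $x$, the whole integral is nonincreasing in $x$, so the hazard rate $f/\overline{F}$ is nondecreasing and $\log\overline{F}$ is concave. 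Everything else is routine: verifying the two factorizations above, checking positivity so the logarithms are well defined on the interior of the simplex (where $\overline{F}(c_i)>0$ and $w(c_i)>0$), and invoking the symmetry-plus-concavity argument. I expect no difficulty beyond the log-concavity lemma, which is the crux linking the distributional hypothesis to the concavity structure that both objectives require.
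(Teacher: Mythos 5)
Your proposal is correct and follows essentially the same route as the paper: restrict to unanimous mechanisms via the characterization, factor the objectives into $\prod_j \overline{F}(c_j)$ times (for welfare) $\sum_i w(c_i)$, and conclude that equal shares are optimal by symmetry plus concavity, with log-concavity of $\overline{F}$ inherited from log-concavity of $f$. The only departures are minor strengthenings rather than a different route: you prove the Bagnoli--Bergstrom lemma inline (via monotonicity of the reciprocal hazard rate $\overline{F}/f$) where the paper cites it, and you fold the welfare objective into the single concave function $\log\bigl(\sum_i w(c_i)\bigr)+\sum_j \log\overline{F}(c_j)$ where the paper instead notes that both factors of the product are separately maximized at the equal-share point.
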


\begin{proof} Let $C=(c_1,c_2,\ldots,c_n)$ be the cost share vector. Maximizing
    the expected number of consumers is equivalent to maximizing the
    probability of $C$ getting unanimously accepted, which equals $\overline{F}(c_1)
    \overline{F}(c_2) \ldots \overline{F}(c_n)$.  Its log equals
    $\sum_{1\le i\le n}\log(\overline{F}(c_i))$.  When $f$ is log-concave, so is
    $\overline{F}$ according to~\cite{Bagnoli2005:Log}.  This means that when cost
    shares are equal, the above probability is maximized.

    The expected social welfare
    under the cost share vector $C$ equals $\sum w(c_i)$, conditional on all
    agents accepting their shares. This is maximized when shares are equal.
    Furthermore, when all shares are equal, the probability of unanimous
approval is also maximized.  \end{proof}

$f$ being log-concave is also called the \emph{decreasing reversed failure
rate} condition~\cite{Shao2016:Optimal}.  Bagnoli and
Bergstrom~\cite{Bagnoli2005:Log} proved log-concavity for many common
distributions, including the distributions in Table~\ref{tb:logconcave} (for
all distribution parameters).  All distributions are restricted to $[0,1]$.
We also list some limited results for welfare-concavity.
We prove that the uniform distribution is welfare-concave, but for the other
distributions, the results are based on simulations.
Finally, we include the conditions for $f$ being nonincreasing, which will be used in the excludable public project model.

\begin{table}[ht]
\caption{Example Log-Concave Distributions}
\centering
\begin{tabular}{ l c r }\label{tb:logconcave}
    & Welfare-Concavity & Nonincreasing \\
    Uniform $U(0,1)$ & Yes & Yes \\
    \hline
    %Normal & $\sigma^2+\mu\ge 1$ & $\mu\le 0$ \\
    Normal & No ($\mu=0.5,\sigma=0.1$) & $\mu\le 0$ \\
    \hline
  % Exponential & $\lambda\le 1$ & Yes \\
    Exponential & Yes ($\lambda=1$) & Yes \\
    \hline
  %Logistic & $1-(e^{-\frac{1-\mu}{s}})^2\le s(1+e^{-\frac{1-\mu}{s}})^2$ & $\mu\le 0$ \\
    Logistic & No ($\mu=0.5,\sigma=0.1$) & $\mu\le 0$ \\
\end{tabular}
\end{table}

% \begin{itemize}
%     \item Uniform $U(0,1)$ is log-concave, nonincreasing, and welfare-concave.
%     \item Normal $N(\mu,\sigma)$ (restricted to $[0,1]$) is log-concave.
%         When $\mu\le 0$, it is nonincreasing.
%         When $\sigma^2+\mu\ge 1$, it is welfare-concave.
%     \item Exponential distribution (restricted to $[0,1]$) is log-concave and nonincreasing.
%         When $\lambda\le 1$, it is welfare-concave.
% \end{itemize}

% \begin{itemize}
%     \item Uniform $U(0,1)$ is both log-concave and welfare-concave.
%     \item Normal $N(\mu,\sigma)$ (restricted to $[0,1]$) is log-concave. When $\sigma^2+\mu\ge 1$, it is welfare-concave.
%     \item Exponential distribution (restricted to $[0,1]$) with any $\lambda$ is log-concave. When $\lambda\le 1$, it is welfare-concave.
%     \item Logistic distribution (restricted to $[0,1]$) with any $\mu$ and $s$ is log-concave. When $\frac{1-(e^{-\frac{1-\mu}{s}})^2}{(1+e^{-\frac{1-\mu}{s}})^2} \le s$, it is welfare-concave.
% \end{itemize}

Even when optimal, the conservative equal costs mechanism performs poorly.  We
take the uniform $U(0,1)$ distribution as an example. Every agent's cost share
is $\frac{1}{n}$.  The probability of acceptance for one agent is
$\frac{n-1}{n}$, which approaches $1$ asymptotically. However, we need
unanimous acceptance, which happens with much lower probability.  For the
uniform distribution, asymptotically, the probability of unanimous acceptance
is only $\frac{1}{e}\approx 0.368$. In general, we have the following bound:

\begin{theorem}
If $f$ is Lipschitz continuous, then when $n$ goes to infinity, the probability of unanimous
    acceptance under the conservative equal costs mechanism is $e^{-f(0)}$.
\end{theorem}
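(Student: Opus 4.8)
The plan is to reduce the claim to the classical limit $\lim_{n\to\infty}(1-a/n)^n=e^{-a}$ after pinning down the right value of $a$. Under the conservative equal costs mechanism every agent's cost share is $\frac{1}{n}$, so agent $i$ accepts exactly when $v_i\ge\frac{1}{n}$, an event of probability $\overline{F}(1/n)$. Since the valuations are i.i.d., the probability of unanimous acceptance is $\overline{F}(1/n)^n=(1-F(1/n))^n$. Everything therefore hinges on the asymptotics of $F(1/n)$ as $n\to\infty$.

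First I would use that the distribution is supported on $[0,1]$, so $F(0)=0$ and $F(1/n)=\int_0^{1/n}f(x)\,dx$. Splitting $f(x)=f(0)+\bigl(f(x)-f(0)\bigr)$ and integrating gives $F(1/n)=\frac{f(0)}{n}+\int_0^{1/n}\bigl(f(x)-f(0)\bigr)\,dx$. This is exactly where Lipschitz continuity enters: with Lipschitz constant $L$ we have $|f(x)-f(0)|\le Lx$, so the remainder is bounded in absolute value by $\int_0^{1/n}Lx\,dx=\frac{L}{2n^2}$. Hence $nF(1/n)=f(0)+O(1/n)\to f(0)$.

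Next I would take logarithms and expand. Because $F(1/n)\to 0$, we may write $\log\bigl(1-F(1/n)\bigr)=-F(1/n)+O\bigl(F(1/n)^2\bigr)$, so that $n\log\bigl(1-F(1/n)\bigr)=-nF(1/n)+O\bigl(nF(1/n)^2\bigr)$. The first term tends to $-f(0)$ by the previous step, and the second is $O\bigl(n\cdot(1/n)^2\bigr)=O(1/n)\to 0$. Exponentiating yields $\overline{F}(1/n)^n\to e^{-f(0)}$, which is the claim.

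The only real subtlety, and the reason a regularity hypothesis on $f$ is needed at all, is controlling the error in approximating $F(1/n)$ by $f(0)/n$ finely enough that, after multiplication by $n$, it still vanishes. The Lipschitz assumption makes this transparent through the explicit bound $|F(1/n)-f(0)/n|\le L/(2n^2)$, so that $n\cdot O(1/n^2)\to 0$ immediately; it also keeps the quadratic error term $nF(1/n)^2$ under control. (Plain continuity of $f$ at $0$ would already force the error to be $o(1/n)$, but the Lipschitz hypothesis gives the cleanest, explicitly quantified argument.) I expect no obstacle beyond this routine error bookkeeping.
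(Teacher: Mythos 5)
Your proof is correct: the reduction of unanimous acceptance to $\bigl(1-F(1/n)\bigr)^n$, the Lipschitz bound $\lvert F(1/n)-f(0)/n\rvert\le L/(2n^2)$ giving $nF(1/n)\to f(0)$, and the logarithm expansion with the $O(nF(1/n)^2)=O(1/n)$ error term are all sound. The paper states this theorem without providing a proof, but your argument is clearly the intended one --- it is the natural route and is consistent with the paper's own observation that the uniform distribution yields the asymptotic probability $e^{-1}\approx 0.368$.
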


Without log-concavity, the conservative equal costs mechanism is not
necessarily optimal. We present the following dynamic program (DP) for calculating
the optimal unanimous mechanism. We only present the formation for welfare
maximization.\footnote{Maximizing the expected number of consumers can be viewed as a
special case where every agent's utility is $1$ if the project is built}

We assume that there is an ordering of the agents based on their identities.  We define
$B(k,u,m)$ as the maximum expected social welfare under the following conditions:

\begin{itemize}
    \item The first $n-k$ agents have already approved their cost shares, and their total
        cost share is $1-m$. That is, the remaining $k$ agents need to come up with $m$.
    \item The first $n-k$ agents' total expected utility is $u$.
\end{itemize}

The optimal social welfare is then $B(n,0,1)$. We recall that $\overline{F}(c)$ is the probability
that an agent accepts a cost share of $c$, we have
\[
    B(k,u,m)=\max_{0\le c\le m}\overline{F}(c)B(k-1,u+w(c), m-c)
\]
The base case is $B(1,u,m)=\overline{F}(m)(u+w(m))$.  In terms of implementation of
this DP, we have $0\le u\le n$ and $0\le m\le 1$. We
need to discretize these two intervals. If we pick a discretization size of
$\frac{1}{H}$, then the total number of DP subproblems is
about $H^2n^2$.

% We define $P(k,m,b)$ as the maximum probability for $k$ agents to successfully
% cost share $m$ and achieve an expected total welfare of $b$. The optimal social
% welfare is then $\max_b P(n,1,b)b$.

% If $w(m) = \int_m^1 xf(x)dx \ge b$ then $P(1,m,b) = \overline{F}(m)$. When $k\ge 2$, we have

% \[
%     P(k,m,b) = \max_{0\le c\le m} P(k-1,m-c,b-w(c))\overline{F}(c)
% \]

% Here, $m$ and $b$ are both real values between $0$ and $1$. In our computation,
% we discretize $[0,1]$ into grid points
% $\{0,\frac{1}{H},\frac{2}{H},\ldots,1\}$. The number of DP subproblems is then
% $n(H+1)^2$.

To compare the performance of the conservative equal costs mechanism and our DP
solution, we focus on distributions that are not log-concave (hence, uniform
and normal are not eligible).  We introduce the following non-log-concave
distribution family:

\begin{definition}[Two-Peak Distribution $(\mu_1,\sigma_1,\mu_2,\sigma_2,p)$]
    With probability $p$, the agent's valuation is drawn from the normal
    distribution $N(\mu_1,\sigma_1)$ (restricted to $[0,1]$).
    With probability $1-p$, the agent's valuation is drawn from $N(\mu_2,\sigma_2)$ (restricted to $[0,1]$).
\end{definition}

The motivation behind the two-peak distribution is that there may be two
categories of agents. One category is directly benefiting from the public
project, and the other is indirectly benefiting. For example, if the public
project is to build bike lanes, then cyclists are directly benefiting, and the
other road users are indirectly benefiting (\emph{e.g.}, less congestion for
them).  As another example, if the public project is to crowdfund a piece of
security information on a specific software product (\emph{e.g.}, PostgreSQL),
then agents who use PostgreSQL in production are directly benefiting and the
other agents are indirectly benefiting (\emph{e.g.}, every web user is pretty
much using some websites backed by PostgreSQL).  Therefore, it is natural to
assume the agents' valuations are drawn from two different
distributions. For simplicity, we do not consider three-peak, \emph{etc.}

For the two-peak distribution $(0.1,0.1,0.9,0.1,0.5)$, DP significantly
outperforms the conservative equal costs (CEC) mechanism.

\begin{center}
\begin{tabular}{ l c r }
    & E(no. of consumers) & E(welfare)\\
  n=3 CEC & 0.376 & 0.200 \\
    \hline
  n=3 DP & 0.766 & 0.306 \\
    \hline
  n=5 CEC & 0.373 & 0.199 \\
    \hline
  n=5 DP & 1.426 & 0.591 \\
\end{tabular}
\end{center}

\subsection{Excludable Public Project}
Due to the characterization results, we focus on the family of largest
unanimous mechanisms.  We start by showing that the serial cost sharing
mechanism is optimal in some scenarios.

\begin{theorem}\label{thm:excludable}\,
$2$ agents case:
If $f$ is log-concave, then the serial cost sharing mechanism maximizes the expected
    number of consumers.
If $f$ is log-concave and welfare-concave, then the serial cost sharing mechanism
maximizes the expected social welfare.

$3$ agents case:
If $f$ is log-concave and nonincreasing, then the serial cost sharing mechanism maximizes the expected
    number of consumers.
If $f$ is log-concave, nonincreasing, and welfare-concave, then the serial cost sharing mechanism maximizes the social welfare.
\end{theorem}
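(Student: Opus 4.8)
The plan is to work entirely within the family of largest unanimous mechanisms, as licensed by the excludable characterization, and to solve the resulting finite-dimensional optimization over the cost-share vectors. The first observation is that any singleton coalition $\{i\}$ must charge its member the full cost $1$, so it is approved only when $v_i\ge 1$, an event of probability $0$ under a continuous distribution on $[0,1]$; hence singleton coalitions never contribute to the expected objective. For $n=2$ this immediately kills excludability: once the grand coalition $\{1,2\}$ fails, the only fallback is a singleton, which is irrelevant. So the $2$-agent objective is exactly $2\,\overline{F}(a_1)\overline{F}(a_2)$ (consumers) or $\overline{F}(a_1)\overline{F}(a_2)\,(w(a_1)+w(a_2))$ (welfare), where $(a_1,a_2)$ with $a_1+a_2=1$ is the grand-coalition share vector. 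These are precisely the $2$-agent nonexcludable expressions, so Theorem~\ref{thm:nonexcludable} gives equal shares $a_1=a_2=\tfrac12$ as the optimum under the stated conditions, which is the serial cost sharing mechanism.

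For $n=3$ I would first write the objective explicitly using the iterative interpretation of the mechanism. Let $(a_1,a_2,a_3)$ with $\sum a_i=1$ be the grand-coalition shares, and for each pair $\{i,j\}$ let its shares $(b_i,b_j)$ satisfy $b_i+b_j=1$, $b_i\ge a_i$, $b_j\ge a_j$ (monotonicity). In round one the surviving set is $\{i:v_i\ge a_i\}$; the grand coalition is selected when all three survive, and a pair $\{i,j\}$ is selected exactly when the third agent $k$ drops out ($v_k<a_k$) and $i,j$ both accept their increased shares ($v_i\ge b_i$, $v_j\ge b_j$, which already imply round-one survival). This yields, for the expected number of consumers,
\[
3\,\overline{F}(a_1)\overline{F}(a_2)\overline{F}(a_3)\;+\;2\sum_{\{i,j\}}\overline{F}(b_i)\overline{F}(b_j)\,F(a_k),
\]
and an analogous expression for welfare whose grand-coalition term is $\prod_i\overline{F}(a_i)\sum_i w(a_i)$ and whose pair term is $\overline{F}(b_i)\overline{F}(b_j)\,(w(b_i)+w(b_j))\,F(a_k)$.

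The core is then a decoupling argument. First, for fixed grand-coalition shares, each pair term is optimized independently, and maximizing $\overline{F}(b_i)\overline{F}(b_j)$ (resp. $\overline{F}(b_i)\overline{F}(b_j)(w(b_i)+w(b_j))$) subject to $b_i+b_j=1$ is exactly the $2$-agent nonexcludable problem, whose unconstrained optimum is $b_i=b_j=\tfrac12$ by log-concavity (resp. log-concavity and welfare-concavity). Since the monotonicity constraints only shrink the feasible set, every pair term is bounded above by its value at $(\tfrac12,\tfrac12)$. Substituting these bounds gives a clean upper bound on the objective,
\[
g(a)=3\,\overline{F}(a_1)\overline{F}(a_2)\overline{F}(a_3)+2\,\overline{F}(\tfrac12)^2\big(F(a_1)+F(a_2)+F(a_3)\big)
\]
for consumers, and an analogous $G(a)$ for welfare. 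Next I would argue each summand of $g$ (and $G$) is maximized over the simplex at the center $a_1=a_2=a_3=\tfrac13$: the grand-coalition product is maximized at equal shares because $\overline{F}$ is log-concave (and, for welfare, $\sum_i w(a_i)$ is maximized at equal shares by welfare-concavity, both factors being nonnegative), while $\sum_i F(a_i)$ is maximized at equal shares precisely when $F$ is concave, i.e. when $f$ is nonincreasing. Finally, at the center the pair constraints $b_i=\tfrac12\ge\tfrac13=a_i$ are slack, so the upper bound is attained there; the maximizer is exactly the serial cost sharing mechanism.

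The hard part is the coupling between the grand-coalition shares and the pair shares: the excluded agent's rejection probability $F(a_k)$ ties the pair terms to the $a_i$, and the monotonicity constraints $b_i\ge a_i$ forbid a naive separate optimization. The device that resolves this is to bound each pair term by its unconstrained (monotonicity-free) optimum $\overline{F}(\tfrac12)^2$; this decouples the problem into the already-understood grand-coalition product and a new additive term $\sum_i F(a_i)$, and it is this additive term that forces the extra hypothesis that $f$ be nonincreasing (equivalently $F$ concave). The term is absent for $n=2$, since excluding an agent there leaves only a measure-zero singleton, which is exactly why the nonincreasing condition is needed only from $n=3$ onward. I expect the only remaining care to be verifying tightness of the bound at the center and noting $w\ge 0$ so the welfare coefficients stay nonnegative.
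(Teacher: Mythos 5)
Your proof is correct and takes essentially the same route as the paper's: the same decomposition of the expected objective into the grand-coalition term $\overline{F}(a_1)\overline{F}(a_2)\overline{F}(a_3)\sum_i w(a_i)$ and the three pair terms $\overline{F}(b_i)\overline{F}(b_j)F(a_k)(w(b_i)+w(b_j))$, with equal shares optimal for each piece via log-concavity of $\overline{F}$, welfare-concavity of $w$, and concavity of $F$ (from $f$ nonincreasing) for the additive $\sum_i F(a_i)$ term, and the singleton coalitions dismissed as probability-zero events. The only difference is presentational: you make explicit the relax-then-verify-tightness handling of the monotonicity constraints $b_i\ge a_i$ (bounding each pair term by its unconstrained optimum and checking feasibility of $b_i=\frac12\ge\frac13=a_i$ at the center), a step the paper leaves implicit when it maximizes each term separately.
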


For $2$ agents, the conditions are identical to the nonexcludable case.  For
$3$ agents, we also need $f$ to be nonincreasing.  Example distributions
satisfying these conditions were listed in Table~\ref{tb:logconcave}.

\begin{proof}
We only present the proof for welfare maximization when $n=3$, which is the most complex case.
    (For maximizing the number of consumers, all references to the $w$ function should be
    replaced by the constant $1$.)
The largest unanimous mechanism specifies constant cost
shares for every coalition of agents.
We use $c_{1\underline{2}3}$ to denote agent $2$'s cost share when the coalition is $\{1,2,3\}$.
Similarly, $c_{\underline{2}3}$ denotes agent $2$'s cost share when the coalition is $\{2,3\}$.
If the largest unanimous coalition has size $3$, then the expected social welfare gained due to this
    case is:
    \[
        \overline{F}(c_{\underline{1}23})
        \overline{F}(c_{1\underline{2}3})
        \overline{F}(c_{12\underline{3}})
        (
        w(c_{\underline{1}23})
        +w(c_{1\underline{2}3})
        +w(c_{12\underline{3}})
        )
    \]
    Given log-concavity of $\overline{F}$ (implied by the log-concavity of $f$) and welfare-concavity,
    and given that $c_{\underline{1}23}+c_{1\underline{2}3}+c_{12\underline{3}}=1$. We have that the above is maximized when all agents have equal shares.

    If the largest unanimous coalition has size $2$ and is $\{1,2\}$, then the expected social welfare gained due to this
    case is:
    \[
        \overline{F}(c_{\underline{1}2})
        \overline{F}(c_{1\underline{2}})
        F(c_{12\underline{3}})
        (
        w(c_{\underline{1}2})
        +w(c_{1\underline{2}})
        )
    \]
    $F(c_{12\underline{3}})$ is the probability that agent $3$ does not join in the coalition.
    The above is maximized when
    $c_{\underline{1}2}=c_{1\underline{2}}$, so it simplifies to
    $2\overline{F}(\frac{1}{2})^2 w(\frac{1}{2}) F(c_{12\underline{3}})$.
    We then consider the welfare gain from all coalitions of size $2$:
    \[
        2\overline{F}(\frac{1}{2})^2
        w(\frac{1}{2})(
        F(c_{\underline{1}23})
        +F(c_{1\underline{2}3})
        +F(c_{12\underline{3}})
        )
    \]
    Since $f$ is nonincreasing, we have that $F$ is concave, the above is again maximized when all cost shares are equal.

    Finally, the probability of coalition size $1$ is $0$, which can be ignored in our analysis.
    Therefore, throughout the proof, all terms referenced are maximized when the cost shares are equal.
\end{proof}

For $4$ agents and uniform distribution, we have a similar result.

\begin{theorem}\label{thm:uniform}
    Under the uniform distribution $U(0,1)$, when $n=4$, the serial cost sharing
    mechanism maximizes the expected number of consumers and the expected social welfare.
\end{theorem}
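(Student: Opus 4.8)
The plan is to mirror the layered argument from the $n=3$ case, decomposing the expected objective according to the size of the largest unanimously approved coalition $S^*$, which for $n=4$ ranges over $\{4,3,2\}$: size-$1$ events require an agent to accept a cost share of $1$, a probability-$0$ event under a continuous distribution, and the empty outcome contributes nothing. Writing $c_{i,S}$ for agent $i$'s cost share in coalition $S$, I would express the contribution of the events $\{|S^*|=k\}$ as a sum over the $\binom{4}{k}$ coalitions of that size, each summand factoring into (i) the probability that the members accept their shares, (ii) a ``blocking'' probability that no larger coalition forms, and (iii) the conditional welfare $\sum_{i\in S}w(c_{i,S})$ (replaced by $|S|$ for the number-of-consumers objective). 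Throughout I would use that for $U(0,1)$ we have $\overline F(x)=1-x$ (log-concave), $w(x)=(1-x)/2$ (welfare-concave), and $F(x)=x$ (linear).

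The top two layers go through essentially as before. The event $\{S^*=N\}$ with $N$ the grand coalition is exactly ``all four accept in $N$,'' so its contribution $\prod_i\overline F(c_{i,N})\sum_i w(c_{i,N})$ is maximized at $c_{i,N}=\tfrac14$ by log-concavity and welfare-concavity, with no coupling to other layers. For a size-$3$ coalition $T$, monotonicity forces its members (who accept in $T$) to accept in $N$ as well, so $T$ can fail to extend only if the excluded agent $j$ rejects in $N$; this yields a contribution $\bigl(\prod_{i\in T}\overline F(c_{i,T})\bigr)F(c_{j,N})\sum_{i\in T}w(c_{i,T})$, whose internal shares are driven to $\tfrac13$. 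Summing over the four size-$3$ coalitions leaves the coupling factor $\sum_j F(c_{j,N})$, which under the uniform distribution equals $\sum_j c_{j,N}=1$ and is therefore constant.

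The remaining and decisive layer is $|S^*|=2$. Here the internal shares of each pair are again driven to $\tfrac12$ by the pairwise concavity argument, but the blocking probability is now the measure of a genuinely two-dimensional region in the square of the two outside agents' valuations (after conditioning on the pair's acceptance): an intersection of the conditions that neither of the two size-$3$ coalitions containing the pair, nor the two cross size-$3$ coalitions, nor the grand coalition extends --- that is, an intersection of half-planes and unions of half-planes whose boundaries are the size-$3$ and grand-coalition cost shares. Consequently the size-$3$ shares $c_{i,T}$ and grand shares $c_{i,N}$ are \emph{not} free to be optimized in isolation: raising a size-$3$ share above $\tfrac13$ shrinks that coalition's own (three-consumer) contribution while enlarging a pair's (two-consumer) blocking region, and the two effects must be balanced against each other. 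This non-separable coupling across layers is the main obstacle, and it is precisely what the $n=3$ argument never had to confront.

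To resolve it I would exploit the linearity of $F$ and $w$ under $U(0,1)$: after reducing all internal shares to equal, the total objective becomes an explicit piecewise-polynomial function of the finitely many remaining cross-coalition parameters (the grand and size-$3$ free shares), subject to the nested monotonicity constraints. I would then show this function is maximized at the fully symmetric point --- all size-$k$ shares equal to $1/k$, i.e.\ the serial cost sharing mechanism --- either by verifying concavity of the relevant pieces or by a direct comparison showing that any deviation enlarging a size-$2$ blocking region loses at least as much from the higher-valued size-$3$ and size-$4$ terms. The number-of-consumers objective then follows from the same computation with every $w(\cdot)$ replaced by the constant $1$.
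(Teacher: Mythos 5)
Your layered decomposition is sound, and it is the natural extension of the paper's $n=3$ argument (the paper's source does not display its own proof of this theorem). Your accounting of the blocking events is also correct: for a pair $P=\{i,j\}$ with outside agents $k,l$, given that $P$ approves, $S^*=P$ exactly when both extension triples and the grand coalition fail --- your extra conditions on the two ``cross'' triples are redundant, since a cross triple approving together with $P$ would force $N$ to approve --- and under $U(0,1)$ inclusion--exclusion gives the blocking probability $B_P=x_k a_l+x_l a_k-a_k a_l$, where $a_k=c_{k,N}$ and $x_k=c_{k,P\cup\{k\}}$. The genuine gap is that the decisive step is only announced, never performed, and your intermediate claims are premature: the statements that the grand shares are ``maximized at $c_{i,N}=\frac14$ with no coupling to other layers'' and that the triple shares are ``driven to $\frac13$'' are both unjustified where you make them, for exactly the reason you concede later --- $a_k$ and $c_{k,T}$ enter the pairs' blocking regions, so equal split cannot be established layer by layer. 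The whole burden thus lands on your final multi-parameter optimization, and neither of your two proposed routes works as stated: the reduced objective contains terms like $4\prod_i(1-a_i)$ and $a_j\prod_{i\in T_j}(1-c_{i,T_j})$, products of affine factors that are not concave on the simplex, so ``verifying concavity of the relevant pieces'' fails; and the ``direct comparison'' over roughly eleven coupled free shares, with piecewise structure from the monotonicity pinning, is precisely the content of the theorem, not a routine verification.

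What closes the argument is an aggregation device your sketch lacks. Bound each pair's acceptance factor by $\frac14$ and each triple's by $(\frac23)^3$ (AM--GM; both substitutions are valid upper bounds because $B_P\ge a_ka_l\ge 0$, and both are tight at serial cost sharing). Then the triple shares cancel from the aggregated blocking term: summing over the six pairs, $\sum_P B_P=\sum_l a_l\sum_{k\ne l}c_{k,N\setminus\{l\}}-\sum_{k<l}a_ka_l=1-\sum_{k<l}a_ka_l$, since the shares within each triple sum to $1$ --- the same cancellation that, as you correctly noted, makes $\sum_j F(c_{j,N})=1$ kill the size-$3$ coupling. One is then left with maximizing the single four-variable expression $4\prod_i(1-a_i)-\frac12\sum_{k<l}a_ka_l$ over the simplex, whose maximum is at $a_i=\frac14$, and the resulting bound equals the serial-cost-sharing value $\frac{81}{64}+\frac{8}{9}+\frac{5}{16}$ for the expected number of consumers. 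Welfare follows by the same computation with weights $\frac{|S|-1}{2}$ in place of $|S|$, since under the uniform distribution $w(x)=\frac{1-x}{2}$ gives $\sum_{i\in S}w(c_{i,S})=\frac{|S|-1}{2}$, a constant per coalition size. Until some such cancellation (or an equivalent global argument) is exhibited, your proposal reduces the theorem correctly but does not prove it.
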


For $n\ge 4$ and for general distributions, we propose a numerical method for
calculating the performance upper bound.  A largest unanimous mechanism can be
carried out by the following process: we make cost share offers to the agents
one by one based on an ordering of the agents. Whenever an agent disagrees, we
remove this agent and move on to a coalition with one less agent. We repeat
until all agents are removed or all agents have agreed. We introduce the
following mechanism based on a Markov process.  The initial state is
$\{(\underbrace{0,0,\ldots,0}_n),n\}$, which represents that initially, we only
know that the agents' valuations are at least $0$, and we have not made any
cost share offers to any agents yet (there are $n$ agents yet to be offered).
We make a cost share offer $c_1$ to agent $1$.  If agent $1$ accepts, then we
move on to state $\{(c_1,\underbrace{0,\ldots,0}_{n-1}),n-1\}$. If agent $1$
rejects, then we remove agent $1$ and move on to reduced-sized state
$\{(\underbrace{0,\ldots,0}_{n-1}),n-1\}$. In general, let us consider a state
with $t$ users $\{(l_1,l_2,\ldots,l_t),t\}$. The $i$-th agent's valuation lower
bound is $l_i$. Suppose we make offers $c_1,c_2,\ldots,c_{t-k}$ to the first
$t-k$ agents and they all accept, then we are in a state
$\{(\underbrace{c_1,\ldots,c_{t-k}}_{t-k},\underbrace{l_{t-k+1},\ldots,l_{t}}_k),k\}$.
The next offer is $c_{t-k+1}$. If the next agent accepts, then we move on to
$\{(\underbrace{c_1,\ldots,c_{t-k+1}}_{t-k+1},\underbrace{l_{t-k+2},\ldots,l_{t}}_{k-1}),k-1\}$.
If she disagrees (she is then the first agent to disagree), then we move on to
a reduced-sized state
$\{(\underbrace{c_1,\ldots,c_{t-k}}_{t-k},\underbrace{l_{t-k+2},\ldots,l_{t}}_{k-1}),t-1\}$.
Notice that whenever we move to a reduced-sized state, the number of agents yet
to be offered should be reset to the total number of agents in this state.
Whenever we are in a state with all agents offered
$\{(\underbrace{c_1,\ldots,c_t}_t),0\}$, we have gained an objective value of
$t$ if the goal is to maximize the number of consumers.  If the goal is to
maximize welfare, then we have gained an objective value of $\sum_{1\le i\le t}
w(c_i)$.  Any largest unanimous mechanism can be represented via the above
Markov process.  So for deriving performance upper bounds, it suffices to focus
on this Markov process.

% When calculating the performance upper bound, we make one relaxation. That
% is, for two states $(c_1,c_2,\ldots,c_k)$ and $(c_1',c_2',\ldots,c_k')$, with
% $\sum c_i=\sum c_i'=c_s$, we are able to jump between them, and we always
% jump to the best state.  Essentially, we reduce the state to a one
% dimensional value $c_s$.

Starting from a state, we may end up with different objective values. A state
has an expected objective value, based on all the transition probabilities.  We
define $U(t,k,m,l)$ as the maximum expected objective value starting from a
state that satisfies:

\begin{itemize}
    \item There are $t$ agents in the state.

    \item There are $k$ agents yet to be offered.  The first $t-k$ agents
        (those who accepted the offers) have a total cost share of $1-m$. That
        is, the remaining $k$ agents are responsible for a total cost share of $m$.

    \item The $k$ agents yet to be offered have a total lower bound of $l$.
\end{itemize}

The upper bound we are looking for is then $U(n,n,1,0)$, which can be calculated
via the following DP process:
\[
    U(t,k,m,l) = \max_{\substack{0\le l^*\le l\\l^*\le c^*\le m}} \left( \frac{\overline{F}(c^*)}{\overline{F}(l^*)}U(t,k-1,m-c^*,l-l^*)\right.
\]
\[
    \left.+(1-\frac{\overline{F}(c^*)}{\overline{F}(l^*)})U(t-1,t-1,1,1-m+l-l^*)\right)
\]

In the above, there are $k$ agents yet to be offered. We maximize over the next agent's
possible lower bound $l^*$ and the cost share $c^*$. That is, we look for the
best possible lower bound situation and the corresponding optimal offer.
$\frac{\overline{F}(c^*)}{\overline{F}(l^*)}$ is the probability that the next agent
accepts the cost share, in which case, we have $k-1$ agents left. The remaining
agents need to come up with $m-c^*$, and their lower bounds sum up to $l-l^*$.
When the next agent does not accept the cost share, we transition to a new
state with $t-1$ agents in total. All agents are yet to be offered, so $t-1$
agents need to come up with $1$. The lower bounds sum up to $1-m+l-l^*$.

There are two base conditions.  When there is only one agent, she
has $0$ probability for accepting an offer of $1$, so $U(1,k,m,l) = 0$.
The other base case is that when there is only $1$ agent yet to be offered,
the only valid lower bound is $l$ and the only sensible offer is $m$. Therefore,
\[U(t,1,m,l) = \frac{\overline{F}(m)}{\overline{F}(l)}G(t)+(1-\frac{\overline{F}(m)}{\overline{F}(l)})U(t-1,t-1,1,1-m)\]

Here, $G(t)$ is the maximum
objective value when the largest unanimous set has size $t$.  For maximizing
the number of consumers, $G(t)=t$.  For maximizing welfare,
\[G(t)= \max_{\substack{c_1,c_2,\ldots,c_t\\c_i\ge 0\\\sum c_i=1}}\sum_i w(c_i)\]
The above $G(t)$ can be calculated via a trivial DP. % We ignore the details here.
% Let $G'(k,m)$ be defined as
% \[
%     G'(k,m)= \max_{c_1,c_2,\ldots,c_k, c_i\ge 0, \sum c_i=m}\sum_i w(c_i)
% \]

% We have that $G(k)=G'(k,1)$. When calculating $G'$, the base case is $G'(1,m)=w(m)$, and we have

% \[
%     G'(k,m)= \max_{0\le c\le m} \left(w(c)+G'(k-1,m-c)\right)
% \]

Now we compare the performances of the serial cost sharing mechanism against
the upper bounds.  All distributions used here are log-concave.  In every cell,
the first number is the objective value under serial cost sharing, and the
second is the upper bound.  We see that the serial cost sharing mechanism is
close to optimality in all these experiments.  We include both welfare-concave
and non-welfare-concave distributions (uniform and exponential with $\lambda=1$
are welfare-concave). For the two distributions not satisfying
welfare-concavity, the welfare performance is relatively worse.

\begin{center}
\begin{tabular}{ l c r }
    & E(no. of consumers) & E(welfare)\\
    n=5 $U(0,1)$ & 3.559, 3.753 & 1.350, 1.417 \\
    \hline
    n=10 $U(0,1)$ & 8.915, 8.994& 3.938, 4.037 \\
    \hline
    n=5 $N(0.5,0.1)$ & 4.988, 4.993 & 1.492, 2.017 \\
    \hline
    n=10 $N(0.5,0.1)$ & 10.00, 10.00 & 3.983, 4.545 \\
    \hline
    n=5 Exponential $\lambda=1$ & 2.799, 3.038 & 0.889, 0.928 \\
    \hline
    n=10 Exponential $\lambda=1$ & 8.184, 8.476 & 3.081, 3.163 \\
    \hline
    n=5 Logistic$(0.5,0.1)$ & 4.744, 4.781 & 1.451, 1.910 \\
    \hline
    n=10 Logistic$(0.5,0.1)$ & 9.873, 9.886 & 3.957, 4.487 \\
\end{tabular}
\end{center}

% In summary, for the nonexcludable public project model, if the distribution
% satisfies the condition in Theorem~\ref{thm:nonexcludable}, then the
% conservative cost sharing mechanism is optimal. For other distributions, we
% proposed a dynamic programming technique, which produces the optimal mechanism.
% For the excludable public project model, if the distribution and the number of
% agents satisfy the condition in Theorem~\ref{thm:excludable} or
% Theorem~\ref{thm:uniform}, then the serial cost sharing mechanism is optimal.
% For a few common distributions, we calculated the performance upper bound, and
% demonstrated that the serial cost sharing mechanism is close to optimality.
% However, there does exist distribution under which the serial cost sharing is
% far away from optimality.

\begin{example} Here we provide an example to show that the serial cost sharing
    mechanism can be far away from optimality. We pick a simple Bernoulli
    distribution, where an agent's valuation is $0$ with $0.5$ probability and
    $1$ with $0.5$ probability.\footnote{Our paper assumes that the
    distribution is continuous, so technically we should be considering a
    smoothed version of the Bernoulli distribution. For the purpose of
    demonstrating an elegant example, we ignore this technicality.} Under the
    serial cost sharing mechanism, when there are $n$ agents, only half of the
    agents are consumers (those who report $1$s). So in expectation, the number
    of consumers is $\frac{n}{2}$.  Let us consider another simple mechanism.
    We assume that there is an ordering of the agents based on their identities
    (not based on their types). The mechanism asks the first agent to accept a
    cost share of $1$. If this agent disagrees, she is removed from the system.
    The mechanism then moves on to the next agent and asks the same, until an
    agent agrees. If an agent agrees, then all future agents can consume the
    project for free. The number of removed agents follows a geometric
    distribution with $0.5$ success probability. So in expectation, $2$
    agents are removed.  That is, the expected number of consumers is $n-2$.
\end{example}

% For welfare maximization, we do not have an elegant example, but our results in the next section
% show that there exists better mechanism for some distributions.

\section{Mech. Design vs Neural Networks}
% We recall that for the nonexcludable public project problem, we already have
% a DP algorithm for calculating the optimal mechanism for general
% distributions.  For the excludable public project model, experiments suggest
% that with log-concavity, the serial cost sharing mechanism is nearly optimal.
% For the
For the rest of this paper, we focus on the excludable public project model and
distributions that are not log-concave.  Due to the characterization results,
we only need to consider the largest unanimous mechanisms. We use neural
networks and deep learning to solve for well-performing largest unanimous
mechanisms. Our approach involves several technical innovations as discussed in
Section~\ref{sec:intro}.

\subsection{Network Structure} A largest unanimous mechanism specifies constant
cost shares for every coalition of agents. The mechanism can be characterized
by a neural network with $n$ binary inputs and $n$ outputs. The $n$ binary
inputs present the coalition, and the $n$ outputs represent the constant cost
shares.  We use $\vec{b}$ to denote the input vector (tensor) and $\vec{c}$ to
denote the output vector. We use $NN$ to denote the neural network, so
$NN(\vec{b})=\vec{c}$.

There are several constraints on the neural network.

\begin{itemize}
    \item All cost shares are nonnegative: $\vec{c}\ge 0$.

    \item For input coordinates that are $1$s, the output coordinates should
        sum up to $1$.  For example, if $n=3$ and $\vec{b}=(1,0,1)$ (the
        coalition is $\{1,3\}$), then $\vec{c}_1+\vec{c}_3=1$ (agent $1$ and
        $3$ are to share the total cost).

    \item For input coordinates that are $0$s, the output coordinates are
        irrelevant. We set these output coordinates to $1$s, which makes it
        more convenient for the next constraint.

    \item Every output coordinate is nondecreasing in every input coordinate.
        This is to ensure that the agents' cost shares are nondecreasing when
        some other agents are removed. If an agent is removed, then her cost
        share offer is kept at $1$, which makes it trivially nondecreasing.
\end{itemize}

All constraints except for the last is easy to achieve.
We will simply use $OUT(\vec{b})$ as output instead of directly using $NN(\vec{b})$\footnote{This is done by appending additional calculation structures to the output layer.}:
\[OUT(\vec{b})=\text{softmax}(NN(\vec{b})-1000(1-\vec{b}))+(1-\vec{b})\]

Here, $1000$ is an arbitrary large constant.
For example, let $\vec{b}=(1,0,1)$ and $\vec{c}=NN(\vec{b})=(x,y,z)$. We have
\[OUT(\vec{b})=\text{softmax}((x,y,z)-1000(0,1,0))+(0,1,0)\]
\[=\text{softmax}((x,y-1000,z))+(0,1,0)\]
\[=(x',0,z')+(0,1,0)=(x',1,y')\]

In the above, $\text{softmax}((x,y-1000,z))$ becomes $(x',0,y')$ with $x',y'\ge
0$ and $x'+y'=1$ because the second coordinate is very small so it
(essentially) vanishes after softmax. Softmax always produces nonnegtive
outputs that sum up to $1$.  Finally, the $0$s in the output are flipped to
$1$s per our third constraint.

The last constraint is enforced using a penalty function.
For $\vec{b}$ and $\vec{b}'$, where $\vec{b}'$ is obtained from $\vec{b}$ by changing one $1$ to $0$,
we should have that $OUT(\vec{b})\le OUT(\vec{b}')$, which leads to the following penalty (times a large constant):
\[\text{ReLU}(OUT(\vec{b})-OUT(\vec{b}'))\]

Another way to enforce the last constraint is to adopt the idea behind
Sill~\cite{Sill1998:Monotonic}.  The authors proposed a network structure
called the \emph{monotonic networks}. This idea has been used
in~\cite{Golowich2018:Deep}, where the authors also dealt with networks that
take binary inputs and must be monotone.  However, we do not use this approach
because it is incompatible with our design for achieving the other constraints.
There are two other reasons for not using the monotonic network structure. One
is that it has only two layers. Some argue that having a \emph{deep} model is
important for performance in deep learning~\cite{Zhou2017:Deep}.  The other is
that under our approach, we only need a fully connected network with ReLU
penalty, which is highly optimized in state-of-the-art deep learning toolsets.
In our experiments, we use a fully connected network with four layers ($100$
nodes each layer) to represent our mechanism.

\subsection{Cost Function}
For presentation purposes, we focus on maximizing the expected number of
consumers.  Only slight adjustments are needed for welfare maximization.

Previous approaches of mechanism design via neural networks used \emph{static}
networks~\cite{Golowich2018:Deep,
Duetting2019:Optimal,Shen2019:Automated,Manisha2018:Learning}. Given a sample,
the mechanism simulation is done on the network.  Our largest unanimous
mechanism involves iterative decision making.
% We recall that the largest unanimous mechanism can be interpreted as follows.
% We start with all agents.  Only agents who approve their cost shares remain
% in the system. The other agents are forever removed. The remaining agents
% form a new coalition, which leads to a new cost share vector.  We repeat
% until all agents agree or all agents are removed.
We actually can model the process via a static network, but the result is not
good.  The initial offers are $OUT((1,1,\ldots,1))$.  The remaining agents
after the first round are then $S=\text{sigmoid}(v-OUT((1,1,\ldots,1)))$.
Here, $v$ is the type profile sample. The sigmoid function turns positive
values to (approximately) $1$s and negative values to (approximately)  $0$s.
The next round of offers are then $OUT(S)$. The remaining agents afterwards are
then $\text{sigmoid}(v-OUT(S))$.  We repeat this $n$ times because the largest
unanimous mechanism must terminate after $n$ rounds.  The final coalition is a
converged state, so even if the mechanism terminates before the $n$-th round,
having it repeat $n$ times does not change the result (except for additional
numerical errors).  Once we have the final coalition $S^f$, we include
$\sum_{x\in S^f}x$ (number of consumers) in the cost function.\footnote{Has to multiply
$-1$ as we typically minimize the cost function.} However, this approach
performs \emph{abysmally}, possibly due to the vanishing gradient problem and
numerical errors caused by stacking $n$ sigmoid functions.

We would like to avoid stacking sigmoid to model iterative decision making (or
get rid of sigmoid altogether). We propose an alternative approach, where
decisions are simulated off the network using a separate program (\emph{e.g.},
any Python function). The advantage of this approach is that it is now trivial
to handle complex decision making.  However, experienced neural network
practitioners may immediately notice a pitfall.  Given a type profile sample
$v$ and the current network $NN$, if we simulate the mechanism off the network
to obtain the number of consumers $x$, and include $x$ in the cost function,
then training will fail completely. This is because $x$ is a constant that
carries no gradients at all.\footnote{We use PyTorch in our experiments. An
overview of Automated Differentiation in PyTorch is available here~\cite{Paszke2017:Automatic}.}

One way to resolve this is to interpret the mechanisms as price-oriented
rationing-free (PORF) mechanisms~\cite{Yokoo2003:Characterization}.  That is,
if we single out one agent, then her options (outcomes combined with payments)
are completely determined by the other agents and she simply has to choose the
utility-maximizing option.  Under a largest unanimous mechanism, an agent faces
only two results: either she belongs to the largest unanimous coalition or not.
If an agent is a consumer, then her payment is a constant due to
strategy-proofness, and the constant payment is determined by the other agents.
Instead of sampling over complete type profiles, we sample over $v_{-i}$ with a
random $i$.  To better convey our idea, we consider a specific example.  Let
$i=1$ and $v_{-1}=(\cdot, \frac{1}{2},\frac{1}{2},\frac{1}{4}, 0)$.  We assume
that the current state of the neural network is exactly the serial cost sharing
mechanism.  Given a sample, we use a separate program to calculate the
following entries. In our experiments, we simply used Python simulation to
obtain these entries.

\begin{itemize}

    \item The objective value if $i$ is a consumer ($O_s$). Under the example,
        if $1$ is a consumer, then the decision must be $4$ agents each pays
        $\frac{1}{4}$. So the objective value is $O_s=4$.

    \item The objective value if $i$ is not a consumer ($O_f$). Under the
        example, if $1$ is not a consumer, then the decision must be $2$ agents
        each pay $\frac{1}{2}$. So the objective value is $O_f=2$.

    \item The binary vector that characterizes the coalition that decides $i$'s
offer ($\vec{O_b}$). Under the example, the vector is $\vec{O_b}=(1,1,1,1,0)$.  \end{itemize}

$O_s$, $O_f$, and $\vec{O_b}$ are constants without gradients. We link them together
using terms with gradients, which is then included in the cost function:
\begin{equation}\label{eq:single1}
    (1-F(OUT(\vec{O_b})_i))O_s + F(OUT(\vec{O_b})_i)O_f
\end{equation}

$1-F(OUT(\vec{O_b})_i)$ is the probability that agent $i$ accepts her offer.
$F(OUT(\vec{O_b})_i)$ is then the probability that agent $i$ rejects her offer.
$OUT(\vec{O_b})_i$ carries gradients as it is generated by the network.  We use the
analytical form of $F$, so the above term carries gradients.\footnote{PyTorch
has built-in analytical CDFs of many common distributions.}

The above approach essentially feeds the prior distribution into the cost
function. We also experimented with two other approaches. One does not use the
prior distribution. It uses a full profile sample and uses one layer of sigmoid to select
between $O_s$ or $O_f$:
\begin{equation}\label{eq:sigmoid}
    \text{sigmoid}(v_i-OUT(\vec{O_b})_i)O_s + \text{sigmoid}(OUT(\vec{O_b})_i-v_i))O_f
\end{equation}

The other approach is to feed ``even more'' distribution into the cost
function.  We single out two agents $i$ and $j$. Now there are $4$ options:
they both win or both lose, only $i$ wins, and only $j$ wins. We still use $F$
to connect these options together.

In Section~\ref{sec:experiment}, in one experiment, we show that singling out
one agent works the best. In another experiment, we show that even if we do not
have the analytical form of $F$, using an analytical approximation also enables
successful training.

\subsection{Supervision as Initialization} We introduce an additional
supervision step in the beginning of the training process as a systematic way
of initialization.  We first train the neural network to mimic an existing
manual mechanism, and then leave it to gradient descent.  We considered three
different manual mechanisms. One is the serial cost sharing mechanism.  The
other two are based on two different heuristics:

% If we already have an existing manual mechanism, but we
% like to see whether we can improve it via computational methods.  Then our
% approach can be applied. We first learn toward the manual mechanism, then with
% gradient descent, we are able to achieve better performances.  Another
% application is that if we have a manual heuristic-based mechanism, but due to
% its heuristic nature, sometimes it violate the mechanism constraints, then we
% could first teach the network this heuristic, and then use gradient descent to
% fix the constraint violations. Finally, having a good initial state may greatly
% reduce the training speed. Our examples show that for some scenarios, it takes
% a long time for the random initialized approach (without the supervision step)
% to catch up. For computationally expensive experiments, sometimes it never
% catches up.\footnote{We typically run our algorithm for a few hours. If may
% take a few days for the random initialization to catch up, so for some
% experiments, within our time scope, the random initialization version
% effectively never caught up.}

\begin{definition}[One Directional Dynamic Programming] We make offers to the
    agents one by one. Every agent faces only one offer. The offer is based on
    how many agents are left, the objective value cumulated so far by the
    previous agents, and how much money still needs to be raised.  If an agent
    rejects an offer, then she is removed from the system.  At the end of the
    algorithm, we check whether we have collected $1$.  If so, the project is
    built and all agents not removed are consumers.  This mechanism belongs to
the largest unanimous mechanism family. This mechanism is not optimal because
we cannot go back and increase an agent's offer. \end{definition}

\begin{definition}[Myopic Mechanism] For coalition size $k$, we treat it as a
    nonexcludable public project problem with $k$ agents.  The offers are
    calculated based on the dynamic program proposed at the end of
    Subsection~\ref{sub:nonexcludable}, which computes the optimal offers for
    the nonexcludable model.  This is called the myopic mechanism, because it
    does not care about the payoffs generated in future rounds.  This mechanism
    is not necessarily feasible, because the agents' offers are not necessarily
nondecreasing when some agents are removed.  \end{definition}

% For the above mechanism, the offer is calculated as follows: We introduce
% $DP(k,g,m)$ as the expected number of consumers when the last $k$ agents are
% responsible for a total cost share of $m$, and $g$ among the first $n-k$ agents
% have agreed to their offers. $DP(n,0,1)$ is then the overall objective.
% \[DP(k,g,m)=\max_{0\le c\le m}\bar{F}(c)DP(k-1,g+1,m-c)+F(c)DP(k-1,g,m)\] The
% optimal offer for the next agent (among the remaining $k$ agents) is then the
% optimal $c$ in the above. The base case is $DP(1,g,m)=\bar{F}(m)(g+1)$, since
% if $m$ is left for the last agent, then her share must be $m$.

\section{Experiments}\label{sec:experiment} The experiments are conducted on a
machine with Intel i5-8300H CPU.\footnote{We experimented with both PyTorch and
Tensorflow (eager mode).  The PyTorch version runs significantly faster,
possibly because we are dealing with dynamic graphs.} The largest experiment
with $10$ agents takes about $3$ hours. Smaller scale experiments take only
about $15$ minutes.
% \footnote{As a comparison, \cite{Duetting2019:Optimal}
% reports that it takes $9$ hours to solve a $2$ bidders $3$ items case, on a
% cluster of GPUs. Of course, the mechanism design setting is different and our
% approach is not as general as the one considered
% in~\cite{Duetting2019:Optimal}.}

In our experiments, unless otherwise specified, the distribution considered is
two-peak $(0.15,0.1,0.85,0.1,0.5)$.  The x-axis shows the number of training
rounds. Each round involves $5$ batches of $128$ samples ($640$ samples each
round).  Unless otherwise specified, the y-axis shows the expected number of
\textbf{non}consumers (so lower values represent better performances).  Random
initializations are based on Xavier normal with bias $0.1$.

Figure~\ref{fig:1} (Left) shows the performance comparison of three different
ways for constructing the cost function: using one layer of sigmoid
(without using distribution) based on~\eqref{eq:sigmoid}, singling out one agent based on~\eqref{eq:single1}, and singling out two agents.
All trials start from random initializations.  In this experiment, singling out
one agent works the best. The sigmoid-based approach is capable of moving the
parameters, but its result is noticeably worse. Singling out two agents has
almost identical performance to singling out one agent, but it is slower in
terms of time per training step.

Figure~\ref{fig:1} (Right) considers the Beta $(0.1,0.1)$ distribution.  We use
Kumaraswamy $(0.1,0.354)$'s analytical CDF to approximate the CDF of Beta
$(0.1,0.1)$. The experiments show that if we start from random initializations
(Random) or start by supervision to serial cost sharing (SCS), then the cost
function gets stuck. Supervision to one directional dynamic programming (DP)
and Myoptic mechanism (Myopic) leads to better mechanisms. So in this example
scenario, approximating CDF is useful when analytical CDF is not available. It
also shows that supervision to manual mechanisms works better than random
initializations in this case.

%\vspace{-.2in}
\begin{figure}%[H]
    \caption{Effect of Distribution Info on Training}
\centering
\includegraphics[width=\textwidth]{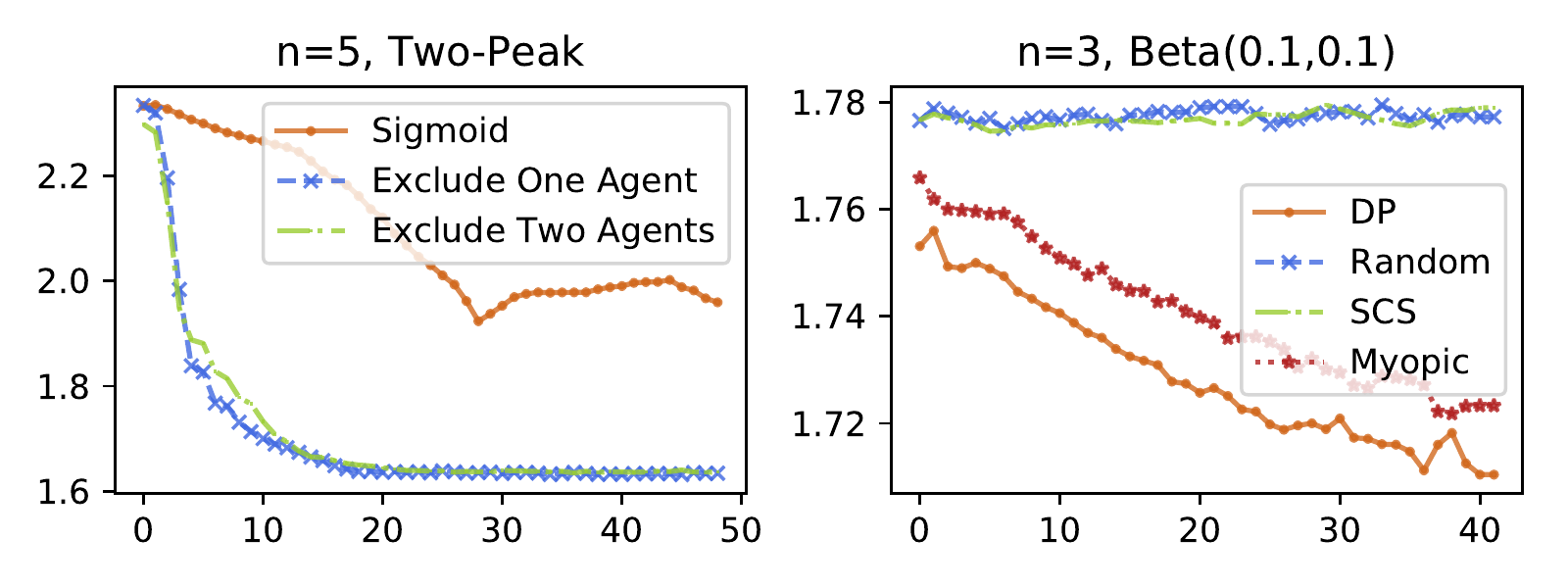}\label{fig:1}
\end{figure}
%\vspace{-.2in}

Figure~\ref{fig:2} (Top-Left $n=3$, Top-Right $n=5$, Bottom-Left $n=10$) shows
the performance comparison of supervision to different manual mechanisms.  For
$n=3$, supervision to DP performs the best. Random initializations is able to
catch up but not completely close the gap. For $n=5$, random initializations
caught up and actually became the best performing one. The Myopic curve first
increases and then decreases because it needs to first fix the constraint
violations. For $n=10$, supervision to DP significantly outperforms the others. Random
initializations closes the gap with regard to serial cost sharing, but it then
gets stuck. Even though it looks like the DP curve is flat, it is actually improving, albeit very slowly. A magnified version is shown in Figure~\ref{fig:2} (Bottom-Right).

%\vspace{-.2in}
\begin{figure}%[H]
    \caption{Supervision to Different Manual Mechanisms}
\centering
    \includegraphics[width=\textwidth]{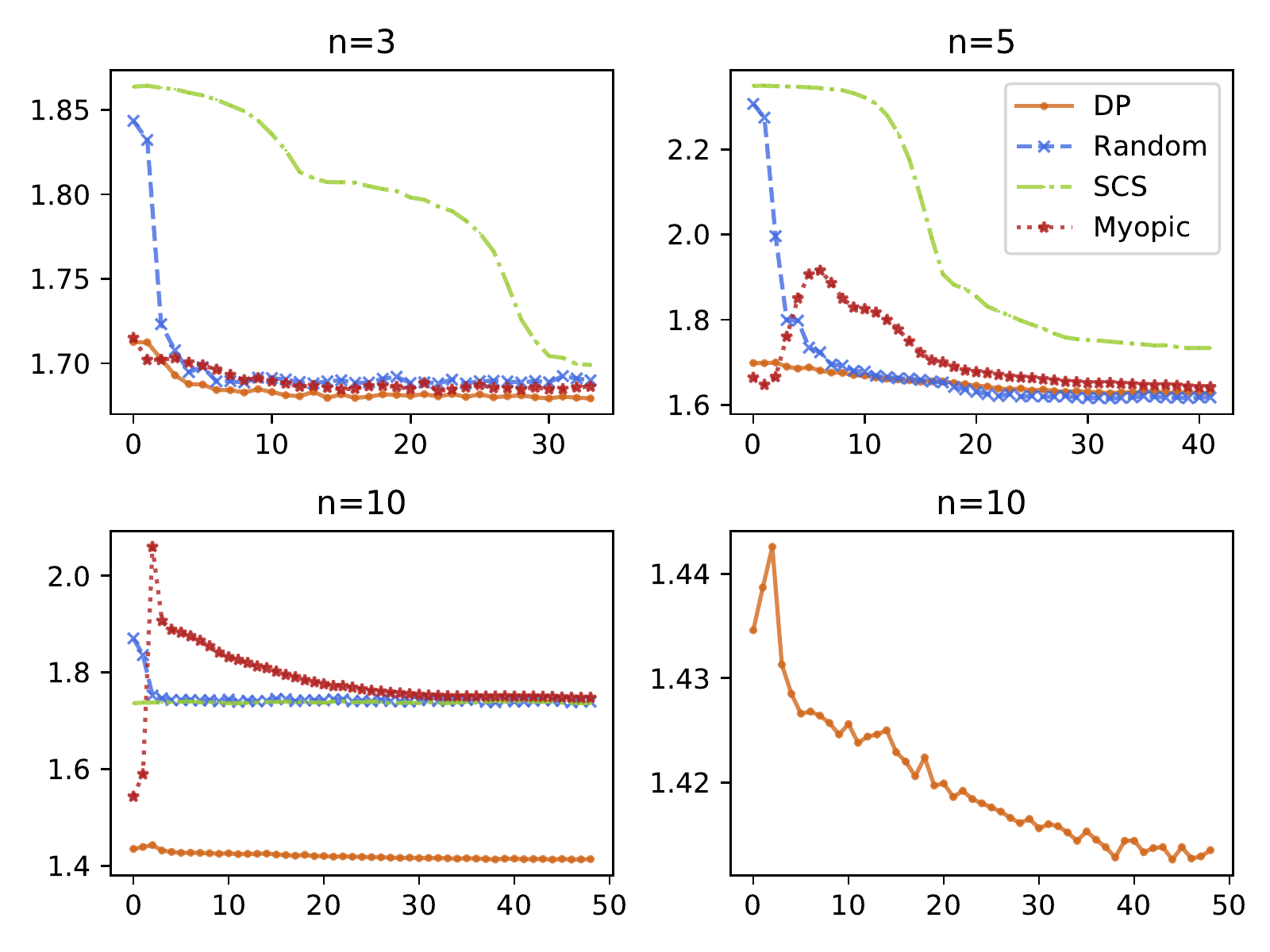}\label{fig:2}
\end{figure}
%\vspace{-.2in}

%We conclude with two experiments on maximizing social welfare.
%Unfortunately, our networks were not able to find noticeable improvement for
%welfare maximization.
Figure~\ref{fig:3} shows two experiments on maximizing expected social welfare (y-axis) under
two-peak $(0.2,0.1,0.6,0.1,0.5)$.  For $n=3$, supervision to DP leads to the
best result. For $n=5$, SCS is actually the best mechanism we can find (the
cost function barely moves).
% In this scenario, all manual mechanisms have very similar welfares: $0.4566$
% (DP), $0.4419$ (SCS), $0.4461$ (Myopic). Even random initialization
% \emph{without training} has a welfare of $0.4428$.  That's at most $5\%$
% difference across these starting points. Our neural networks were That is,
% for $n=5$, the neural networks approach does not lead to better mechanisms.
It should be noted that all manual mechanisms \emph{before training} have very similar welfares:
$0.7517$ (DP), $0.7897$ (SCS), $0.7719$ (Myopic). Even random initialization
before training has a welfare of $0.7648$.  It could be that there is just little room
for improvement here.
%That's at most $5\%$ difference
%across these starting points.
% including random initializations (less than $5\%$ difference). It could be that
% there is actually no room for improvement at all.

%\vspace{-.2in}
\begin{figure}%[H]
    \caption{Maximizing Social Welfare}
\centering
    \includegraphics[width=\textwidth]{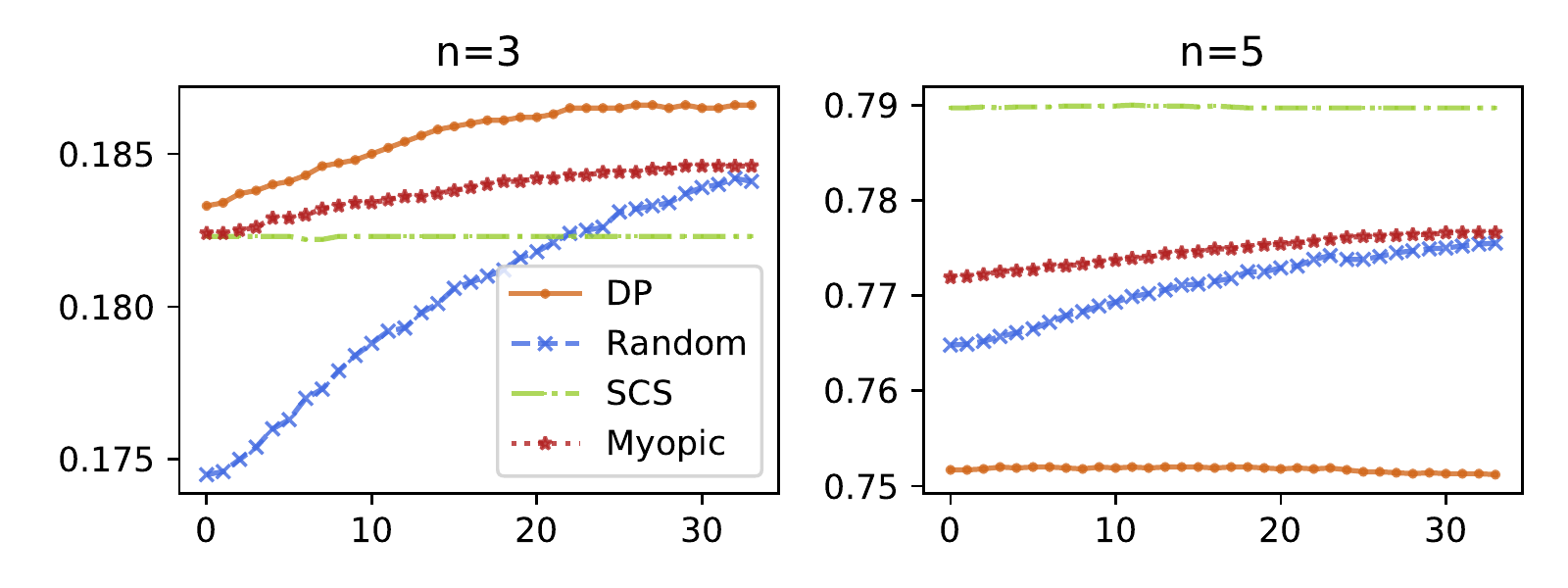}\label{fig:3}
\end{figure}
%\vspace{-1in}

%\bibliographystyle{ACM-Reference-Format}
\bibliographystyle{plain}
\bibliography{/home/mingyu/nixos/newmg.bib}
\end{document}